\theoremstyle{definition}
\newtheorem{defn}{Definition}
\newtheorem{thm}{Theorem}
\newtheorem{assum}{Assumption}
\newtheorem{rem}{Remark}
\newtheorem{prob}{Problem}
\newtheorem{lem}{Lemma}
\newtheorem{cor}{Corollary}
\newtheorem{exmp}{Example}
\DeclareMathOperator*{\argmax}{arg\,max}
\title{\LARGE \bf The Stackelberg Equilibrium for One-sided Zero-sum Partially Observable Stochastic Games} 
\author{Wei Zheng, Taeho Jung, Hai Lin
\thanks{The support of the National Science Foundation (Grant No. IIS-1724070, CNS-1830335, IIS-2007949) is gratefully acknowledged. Corresponding author Hai Lin. Tel. +574-6313177.}
\thanks{Wei Zheng and Hai Lin are with the Department of Electrical Engineering, University of Notre Dame, Notre Dame, IN, 46556 USA. {\tt\small wzheng1@nd.edu, hlin1@nd.edu}. Taeho Jung is with the Computer Science and Engineering Department, University of Notre Dame, IN, USA. {\tt\small tjung@nd.edu}.}
}
\begin{document}

\maketitle
\thispagestyle{empty}
\pagestyle{empty}

\begin{abstract}
Formulating cyber-security problems with attackers and defenders as a partially observable stochastic game has become a trend recently. Among them, the one-sided two-player zero-sum partially observable stochastic game (OTZ-POSG) has emerged as a popular model because it allows players to compete for multiple stages based on partial knowledge of the system. All existing work on OTZ-POSG has focused on the simultaneous move scenario and assumed that one player's actions are private in the execution process. However, this assumption may become questionable since one player's action may be detected by the opponent through deploying action detection strategies. Hence, in this paper, we propose a turn-based OTZ-POSG with the assumption of public actions and investigate the existence and properties of a Stackelberg equilibrium for this game. We first prove the existence of the Stackelberg equilibrium for the one-stage case and show that the one-stage game can be converted into a linear-fractional programming problem and therefore solved by linear programming. For multiple stages, the main challenge is the information leakage issue as the public run-time action reveals certain private information to the opponent and allows the opponent to achieve more rewards in the future. To deal with this issue, we adopt the concept of $\epsilon$-Stackelberg equilibrium and prove that this equilibrium can be achieved for finite-horizon OTZ-POSGs. We propose a space partition approach to solve the game iteratively and show that the value function of the leader is piece-wise linear and the value function of the follower is piece-wise constant for multiple stages. Finally, examples are given to illustrate the space partition approach and show that value functions are piece-wise linear and piece-wise constant. 
\end{abstract}

%follower's policy is private-information-dependent. When the follower takes full advantage of his or her private information, the private information will be revealed. Then, the leader will infer more information about which normal-form game they are playing and achieves more rewards in following stages. 
%At the Stackelberg equilibrium, players have no incentive to change their policies. To study the behavior of this game, 
%Hence, the value function of the leader is 
%This game is played by a leader and a follower in turn for multiple stages.
%Distinct from OTZ-POSGs in the literature, our game allows all actions taken by players to be public, which makes our game a new formulation. 

\section{Introduction}\label{sec:introduction}
With advances of technologies in computing, communications, and control, new engineered systems require tighter and tighter integration of cyber systems and physical systems, which increases security risks and attack surfaces, and therefore brings new challenges to cyber-security defense \cite{humayed2017cyber}. As attack surfaces increase, cyber-attacks may be composed of multiple stages. For example, attackers may compromise the most vulnerable device first and then exploit it for attacking other devices. Through observing attack effects and behaviors of defenders, attackers may adjust their attacking strategies dynamically. Meanwhile, due to the limited resources for anomaly detection, information such as infected devices is usually private for the defender. The defender has to infer this information from other observations of the system. The dynamic nature and the partial observability of this kind of attack require the defender to be more reactive to the system and robust to information uncertainty.

Recently, various approaches have been proposed to mitigate cyber-security concerns such as the machine learning approach \cite{salloum2020machine}, the data mining approach \cite{buczak2015survey} and the game theory approach \cite{do2017game}. In this paper, we will focus on the game theory approach because it provides a theoretical study of interactions among independent players. Most of the existing games in the literature focus on static behaviors and ignore the dynamic nature. Dynamic games such as repeated games \cite{fallah2008puzzle}, evolutionary games \cite{huang2017markov} and stochastic games \cite{lalropuia2019modeling} consider the dynamic behavior but assume the full observability of game information. As a critical branch of the game theory, the partially observable stochastic game (POSG) attracts more and more attention in the area because the POSG allows players to compete sequentially based on partial knowledge of the system, which is closer to real cyber-security problems \cite{luo2009game,horak2019optimizing,wang2019players}. Game models such as the static game, repeated game, and stochastic game are special cases of the POSG. However, solving the general POSG is a nontrivial task. The dynamic programming algorithm that solves the game exactly becomes inefficient quickly beyond a small horizon \cite{hansen2004dynamic}. Although approximate algorithms such as the $\epsilon$-pruning approximation \cite{kumar2009dynamic} and the Bayesian game approximation \cite{emery2004approximate} have been proposed, the POSG planning algorithm is still not mature enough for practical applications. 

%To achieve a more practical game model, sub-classes of POSGs have been proposed
Sub-classes of POSGs have been proved more practical than the general case. The one-sided two-player zero-sum POSG assumes that one player can observe the state directly while the other player accesses the state via a partial observation \cite{horak2016point}. This model allows both players to maintain a common belief over states, which enables us to design and implement efficient planning algorithms. Meanwhile, this model reserves properties of the general POSG, such as the dynamic nature and partial observability. A point-based approximate algorithm \cite{horak2016point}, a heuristic search algorithm \cite{horak2017heuristic}, and a mixed-integer linear programming approach \cite{ahmadi2018partially} have been proposed to solve the game efficiently. Beyond the theoretical work, the OTZ-POSG has been extensively discussed for cyber-security problems such as the computer network defense \cite{horak2019optimizing,tsemogne2020partially,tomavsek2020using}. For a more general setup than the OTZ-POSG \cite{horak2016point}, the two-player zero-sum POSG with public observations assumes that each player has private information and has a partial observation on the other player's private information \cite{horak2019solving}. By assuming that observations are public, the existence of the Nash equilibrium is guaranteed.

All aforementioned OTZ-POSGs assume that both players move simultaneously and one player's action is unobservable by the other player. However, these assumptions may become questionable for real-world applications. First, to guarantee the simultaneous move is not always realistic, especially for a competing scenario. Secondly, the actions of players may become observable as opponents may deploy action detection strategies in the execution stage. Hence, we propose the turn-based OTZ-POSG with public actions and investigate the finite-horizon Stackelberg equilibrium in this paper. In this game, one player (the leader/the defender) plays first while the other player (the follower/the attacker) follows. Certain statues of the environment or the attacker, such as the attacker's locations, are only partially observed by the defender. The defender plays first because he has to deploy defending resources according to partial observations before the attack. In each stage, both actions of the defender and the attacker are observable at the end of the stage.

First, we prove the existence of the Stackelberg equilibrium for the one-stage OTZ-POSG and show that the one-stage OTZ-POSG can be converted into a linear-fractional programming problem, and therefore solved by linear programming. The value function of the leader is piece-wise linear and the value function of the follower is piece-wise constant. These value functions are solved by enumerating all extreme points of the linear program. For multiple stages, the main challenge is the information leakage issue because the follower's policy is private-information-dependent. When taking full advantage of the private information, the follower reveals certain private information to the leader. Then, the leader can infer more private information from the follower and achieves more rewards in the following stages. To solve this issue, we adopt the concept of $\epsilon$-Stackelberg equilibrium \cite{bacsar1998dynamic}. At this equilibrium, the follower sacrifices certain rewards in the current stage for more rewards in the following stages. We propose a space partition approach to solve the game through value iteration and show that value functions for both players are piece-wise linear and piece-wise constant respectively.

The main contribution of this paper is twofold. First, we prove the existence of the Stackelberg equilibrium for the one-stage OTZ-POSG and show that the one-stage game can be solved by linear programming. Hence, value functions of players are piece-wise linear and piece-wise constant respectively. Secondly, we adopt the concept of $\epsilon$-Stackelberg equilibrium and prove that the $\epsilon$-Stackelberg equilibrium is achieved for finite-horizon OTZ-POSGs with public actions. Meanwhile, we propose a dynamic programming algorithm to solve the finite-horizon OTZ-POSG iteratively through belief space partition.

The rest of the paper is organized as follows. Section \ref{sec:pro_formula} defines the OTZ-POSG and formulates the problem. The existence of the Stackelberg equilibrium for the one-stage OTZ-POSG and the policy-solving algorithm are given in Section \ref{sec:one_stage_game}. Section \ref{sec:multiple_stage} introduces the $\epsilon$-Stackelberg equilibrium and proposes the space partition approach to solve the multiple-stage game iteratively. Section \ref{sec:conc} concludes the paper.

\textbf{Notations}: $\mathbb{R}$ represents the set of real numbers and $\mathbb{R}^{p \times q}$ represents the set of real-valued matrices with $p$ rows and $q$ columns. Specifically, $\mathbf{1}_p$ represents a vector of ones with dimension $p$ and $\mathbf{0}_p$ represents a vector of zeros with dimension $p$. For a vector $v \in \mathbb{R}^{p \times 1}$, $v^T$ stands for the transpose of the vector and $v_{i}$ stands for the $i^{th}$ element of the vector $v$. For two vectors $v, u \in \mathbb{R}^{p \times 1}$, $v \leq u$ implies $v_i \leq u_i$ for all $i$. For a finite set $S$, $|S|$ represents the cardinality of the set. $\mathcal{P}(\cdot)$ denotes the probability and $\mathbb{E}$ denotes the expectation.

\section{Preliminaries and Problem Formulations}\label{sec:pro_formula}
In this section, we give the formal definition of the OTZ-POSG and formulate the problem. 
\begin{defn}
A OTZ-POSG model is defined as a tuple $\mathcal{G} = (I, S, O, A^L, A^F, T, \Xi, \Upsilon, b_0)$ where
\begin{itemize}
    \item $I = \{leader(L),follower(F)\}$ is a set of players;
    \item $S$ is a finite set of states;
    \item $O$ is a finite set of observations; 
    \item $A^i$ is a finite set of actions of player $i \in I$;  
    \item $T: S \times A^L \times A^F \times S \to [0,1]$ is a transition function;
    \item $\Xi: S \times O \to [0,1]$ is an observation function;
    \item $\Upsilon: S \times A^L \times A^F \to \mathbb{R}$ is a reward function of player $F$;
    \item $b_0: S \to [0,1]$ is an initial belief over states.  
\end{itemize}
\end{defn}

The game $\mathcal{G}$ is played in turn and actions of players are public. The game playing process is shown in Fig.~\ref{fig:information_flow}. Arrows represent information dependencies. For any stage $t$, the state $s_t \in S$ is only informed to the follower. The leader takes an action first according to the observation $o_{t} \in O$. This action would not be revealed to the follower until the follower's action is taken. Each player achieves a reward and this reward is not explicitly announced until the end of the game. The state of the system transits from $s_t$ to $s_{t+1}$ according to the transition function $T(s_{t+1}|s_t,a^L_t,a^F_t)$ which defines the distribution over the next state $s_{t+1} \in S$ after taking a joint action $[a^L_t,a^F_t] \in A^L \times A^F$ from the state $s_t$. An observation $o_{t+1} \in O$ generated according to the observation function $\Xi(o_{t+1}|s_{t+1})$ is publicly observed by players. Since actions are public, the leader's optimal strategy in each stage relies on the action executed by the follower in the previous stage (see the red dotted line). The initial belief $b_0$ is a probabilistic distribution over states. It is used to describe the initial knowledge of players. Initially, a state $s_0 \in S$ is drawn according to this distribution and the state is only informed to the follower. We assume that the initial belief is common knowledge of both players.

\begin{figure}
\centering
\includegraphics[width=0.9\linewidth]{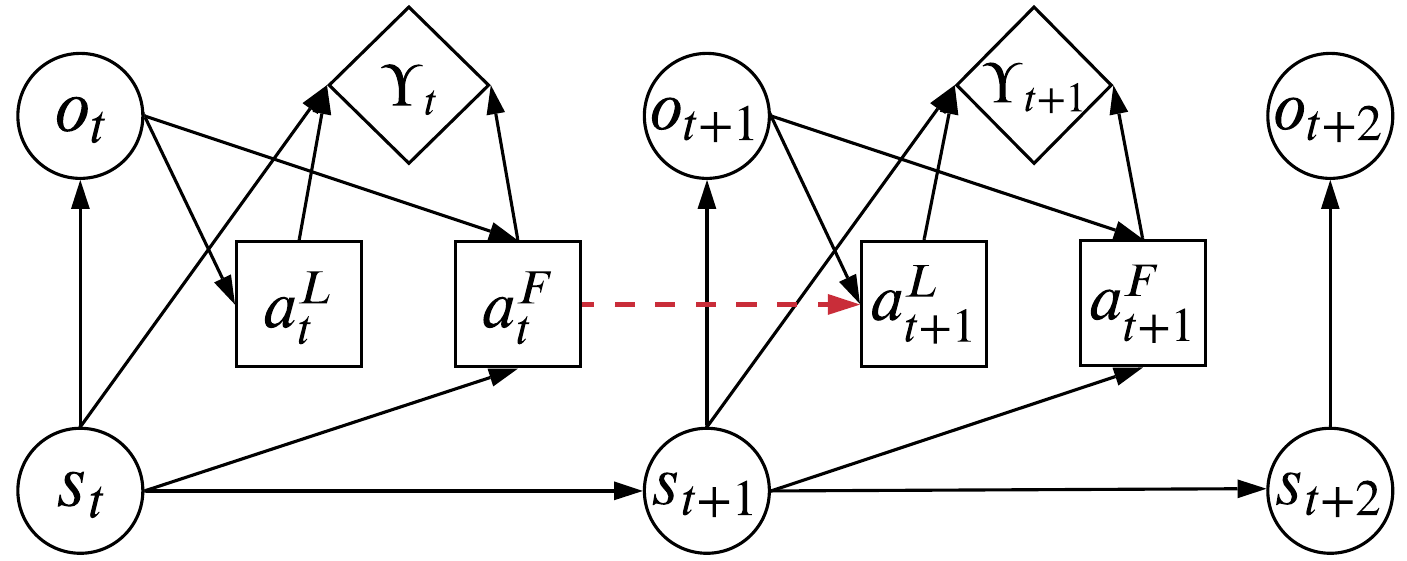}\vspace{-5pt}
\caption{The information flow of the OTZ-POSG.}
\label{fig:information_flow}
\end{figure}

Since the game is zero-sum, the reward function of the leader is $-\Upsilon(s,a^L,a^F)$ for all $s \in S$, $a^L \in A^L$ and $a^F \in A^F$. Because the game is one-sided, the leader has to infer the state $s_t$ of the game through the initial belief $b_0$ and the information observed.

\begin{defn}
Up to stage $t$, the observable path of the leader is $\upsilon_t = b_0 a^L_0 a^F_0 o_1...a^L_{t-1} a^F_{t-1} o_t$ and the observable path of the follower is $\omega_t = b_0 s_0 a^L_0 a^F_0 s_1 o_1...a^L_{t-1} a^F_{t-1} s_t o_t$ where $a_\tau^L \in A^L$, $a_\tau^F \in A^F$, $s_{\tau} \in S$, and $o_{\tau} \in O$ for all $0 \leq \tau \leq t$. 
\end{defn}

Based on the observable path, the leader can reason about the state of the game through a probability distribution over states for which we call the belief state.

\begin{defn} A belief state of a OTZ-POSG $\mathcal{G}$ is defined as a conditional probability distribution over states, i.e., $b_{t}^{s_t} = \mathcal{P}(s_t | b_0,a^L_0,a^F_0,...,o_{t-1},a^L_{t-1},a^F_{t-1},o_{t})$. 
\end{defn}

Beginning with the initial belief $b_0$, we can calculate the belief $b_t$ by the Bayes' rules incrementally. The updated belief $b_{t+1}$ after taking a particular joint action $a_t = [a^L_t, a^F_t]$ and observing $o_{t+1}$ is 
\begin{equation}\label{eq:belief_updating}
    b_{t+1}^{s_{t+1}} = \frac{\Xi^{o_{t+1}}_{s_{t+1}} \sum_{s_{t}} T^{s_{t+1}}_{s_{t},a_t} b_{t}^{s_{t}}}
    {\sum_{s_{t+1}} \Xi^{o_{t+1}}_{s_{t+1}} \sum_{s_{t}} T^{s_{t+1}}_{s_{t},a_t} b_{t}^{s_{t}}},
\end{equation}
where $T^{s_{t+1}}_{s_{t}, a_t}$ and $\Xi^{o_{t+1}}_{s_{t+1}}$ are concise notations for transition probabilities and observation probabilities.

To behave optimally, players have to plan to act according to their observable paths. Because the belief is a sufficient statistic of the path $\upsilon_t$, the leader can act equivalently according to the belief \cite{aberdeen2007policy}. As $\upsilon_t$ is a sub-sequence of $\omega_t$, the follower can maintain the same belief state as that of the leader, and therefore, act equivalently according to the belief-state pair. In this paper, we consider mixed policies for players. A mixed policy is a probability distribution over the action space.

\begin{defn}
The policy of the leader is defined as a mapping from a belief to a distribution $\eta_t$ over the action space $A^L$, i.e., $\pi^L : b_t \to \eta_t$ and the policy of the follower is defined as a mapping from a belief-state pair to a distribution $\delta_t$ over action space $A^F$, i.e., $\pi^F : (b_t, s_t) \to \delta_t$ where $\eta_t \in \mathbb{R}^{|A^L| \times 1}$ and $\delta_t \in \mathbb{R}^{|A^F| \times 1}$. 
\end{defn}

\begin{assum}\label{assum:leader_observed}
In each stage, the strategy adopted by the leader is known by the follower. 
\end{assum}

\begin{rem}
This assumption is usually referred to the \textit{commitment} in the literature \cite{conitzer2006computing}. We have this assumption because the leader usually arrives at the site where the game is played before the follower. For example, the defender usually arrives at the site before the attacker for cyber-security defense. The strategy adopted by the leader could be learned by the follower through long-term observations. 
\end{rem}

Once the policies of players are fixed, each player is expected to receive a reward for finite stages.

\begin{defn} Given a OTZ-POSG model $\mathcal{G}$, a finite horizon $h$, and a pair of policies $[\pi^L, \pi^F]$ for players, the total reward achieved by the leader is defined as 
\begin{equation}
    v_{\pi^{L}, \pi^{F}}^{L} (b_0) =\mathbb{E}[\sum_{t = 1}^{h} \Upsilon(s_t,a^L_t,a^F_t)|b_0, \pi^{L}, \pi^{F}]
\end{equation}
and the total reward achieved by the follower is defined as 
\begin{equation}
    v_{\pi^{L}, \pi^{F}}^{F}(b_0, s_0) = \mathbb{E} [\sum_{t = 1}^{h} \Upsilon(s_t,a^L_t,a^F_t)|b_0, s_0, \pi^{L}, \pi^{F}]
\end{equation}
\end{defn}

\begin{rem}
Once the policies of players are fixed, both the state transition and state observation are stochastic. Hence, total rewards are defined as expectations of cumulative rewards over all stages. It is noted that the total rewards defined above are for the reward function of the follower. Maximizing the reward for the leader is equivalent to minimizing the reward $v_{\pi^{L}, \pi^{F}}^{L} (b_0)$. 
\end{rem}

Since the game is turn-based and the policy of the leader is known by the follower, the leader has to optimize the total reward concerning the best response of the follower. Hence, we introduce the Stackelberg equilibrium to study the behavior of the game \cite{breton1988sequential}.

\begin{defn} Given the total rewards for both players, a pair of policies $[\hat{\pi}^{L}, \hat{\pi}^{F}]$ forms a Stackelberg equilibrium if they satisfy following conditions, 
\begin{equation}
\begin{array}{lll}
    v^L_{\hat{\pi}^{L}, \sigma(\hat{\pi}^{L})}(b_0)  & \leq v^L_{\pi^{L}, \sigma({\pi}^{L})}(b_0), &\forall \pi^{L}, \\
    v^F_{\hat{\pi}^{L}, \hat{\pi}^{F}}(b_0,s_0) & \geq v^F_{\hat{\pi}^{L}, \pi^{F}}(b_0, s_0), &\forall \pi^{F}, 
\end{array}
\end{equation}
where $\sigma({\pi}^{L})$ is a reaction function of the follower defined by $\sigma({\pi}^{L}) = \argmax_{\pi^F} v^F_{{\pi}^{L}, {\pi}^{F}}(b_0,s_0)$.
\end{defn}

At the Stackelberg equilibrium, neither the leader nor the follower has the incentive to change the policy. Because of Assumption \ref{assum:leader_observed}, the follower always responses optimally to the leader's policy. Hence, the Stackelberg equilibrium defines its first inequality with the reaction function $\sigma({\pi}^{L})$.

In this paper, we will study the existence of the Stackelberg equilibrium and provide policy solving algorithms for the one-stage OTZ-POSG and its $\epsilon$-version for the multi-stage OTZ-POSG.

\begin{prob}\label{problem:main_problem} 
Given a OTZ-POSG model $\mathcal{G}$ and a finite horizon $h$, solve policies $\pi^L$ and $\pi^F$ that achieve the Stackelberg equilibrium if the equilibrium exists. 
\end{prob}

\begin{rem}
Each stage of the OTZ-POSG is a two-player zero-sum Bayesian game with one-sided information \cite{zamir2020bayesian}. In the one-stage game, there are several normal-form games indexed by state $s\in S$. The leader has a probability distribution on the normal-form game while the follower knows the exact normal-form game they are playing. As the one-stage game builds the foundation for the multi-stage OTZ-POSG, we first discuss the one-stage game in the following section.
\end{rem}

%%%%%%%%%%%%%%%%%%%%%%%%%%%%%%%%%%%%%%%%%%%%%%%%%%%%%%%%%%%%%%%%%%%%%%%%%%%%%%%%%%
%% The one-stage game
%%%%%%%%%%%%%%%%%%%%%%%%%%%%%%%%%%%%%%%%%%%%%%%%%%%%%%%%%%%%%%%%%%%%%%%%%%%%%%%%%%
\section{One-stage OTZ-POSGs}\label{sec:one_stage_game}
For the convenience of notation, we use $b$ to represent the belief, matrix $\Upsilon_{s^i} \in \mathbb{R}^{|A^L| \times |A^F|}$ to represent the reward matrix for state $s^i \in S$ and $\Upsilon^{[a^L,a^F]}_{s^i}$ to represent the element at row $a^L \in A^L$ and column $a^F \in A^F$.

\begin{defn} 
At the Stackelberg equilibrium, the value function of the leader is 
\begin{equation}\label{eq:value_func_leader}
    v^{L}(b) = \min_{\eta} \Big[ \sum_{s^i} b_{s^i} \big( \max_{\delta^{s^i}} \big[ \eta^T \Upsilon_{s^i} \delta^{s^i} \big] \big) \Big], 
\end{equation}
and the value function of the follower is $v^{F}(b,s^i) = \max_{\delta^{s^i}} \big[\hat{\eta}^T \Upsilon_{s^i} \delta^{s^i} \big], \forall s^i \in S$, where $\eta \in \mathbb{R}^{|A^L| \times 1}$ and $\delta^{s^i} \in \mathbb{R}^{|A^F| \times 1}$ are policies of players, and $\hat{\eta}$ is the solution of Equation (\ref{eq:value_func_leader}). 
\end{defn}

\begin{rem}
The value function $v^{L}(b)$ is the weighted average of the value function $v^{F}(b,s^i)$ over state $s^i$ and the weight is the belief $b$. Both value functions are well-defined because the Stackelberg equilibrium always exists. To see this, we fix the policy of the leader $\eta$ first and solve the value function $v^{F}(b,s^i)$ for any belief $b$ and state $s^i$. Because $\eta$ is in a bounded space, we can solve the value function $v^{L}(b)$ by taking the minimum value over a bounded space. The main challenge here is how to solve the optimal policy $\eta$ for the leader and represent both value functions concisely. To solve this issue, we first make an assumption on the reward function. 
\end{rem}

\begin{lem}\label{theo:equvalence}
Let $\Upsilon_{s^i} \in \mathbb{R}^{p \times q}$ and $\Theta_{s^i} \in \mathbb{R}^{p \times q}$ be reward matrices for state $s^i$, and they are related to each other by the relation $\Theta_{s^i} = \Upsilon_{s^i} + {c} \mathbf{1}_p \mathbf{1}_q^T, \forall s^i \in S$ where $c \in \mathbb{R}$ is a constant. Then, every mixed policy achieving the Stackelberg equilibrium for the matrix set $\{\Theta_{s^1},...,\Theta_{s^n}\}$ also constitutes a mixed policy at the Stackelberg equilibrium for the matrix set $\{\Upsilon_{s^1},...,\Upsilon_{s^n}\}$, and vice versa.
\end{lem}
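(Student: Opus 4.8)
The plan is to show that replacing each $\Upsilon_{s^i}$ by $\Theta_{s^i} = \Upsilon_{s^i} + c\mathbf{1}_p\mathbf{1}_q^T$ shifts both value functions by the same additive constant $c$ while leaving every optimizing argument unchanged; since the Stackelberg equilibrium is characterized purely by the minimizing and maximizing arguments, the equilibrium policies must coincide for the two matrix sets.

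First I would establish the core identity at the level of a single bilinear payoff. For any leader policy $\eta$ and follower policy $\delta^{s^i}$, both of which are probability distributions, we have $\eta^T\mathbf{1}_p = 1$ and $\mathbf{1}_q^T\delta^{s^i} = 1$. Expanding gives
$$\eta^T\Theta_{s^i}\delta^{s^i} = \eta^T\Upsilon_{s^i}\delta^{s^i} + c\,(\eta^T\mathbf{1}_p)(\mathbf{1}_q^T\delta^{s^i}) = \eta^T\Upsilon_{s^i}\delta^{s^i} + c,$$
so the follower's stage payoff is shifted by exactly $c$, independently of the particular policies chosen. This is the only place the normalization constraints on the mixed strategies are used, and it is the crux of the argument.

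Next I would propagate this identity through the two-level optimization in Equation (\ref{eq:value_func_leader}). Because the shift $c$ does not depend on $\delta^{s^i}$, the follower's inner maximization obeys $\max_{\delta^{s^i}}[\eta^T\Theta_{s^i}\delta^{s^i}] = \max_{\delta^{s^i}}[\eta^T\Upsilon_{s^i}\delta^{s^i}] + c$ with an identical set of maximizers $\argmax_{\delta^{s^i}}$. Substituting into the leader's objective and using $\sum_{s^i} b_{s^i} = 1$, the weighted average absorbs a single copy of $c$, so the leader's objective under $\Theta$ equals its objective under $\Upsilon$ plus $c$ for every $\eta$; hence the outer minimization has the same minimizer $\hat\eta$, with $v^L_\Theta(b) = v^L_\Upsilon(b) + c$ and $v^F_\Theta(b,s^i) = v^F_\Upsilon(b,s^i) + c$.

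Finally I would conclude from the definition of the Stackelberg equilibrium. Since the follower's reaction function $\sigma(\cdot)$ is an $\argmax$ and the leader's optimal response an $\argmin$, and both argument sets are preserved under the uniform shift by $c$, any pair achieving the equilibrium for $\{\Theta_{s^1},\dots,\Theta_{s^n}\}$ satisfies the equilibrium inequalities (each shifted by the common constant $c$) for $\{\Upsilon_{s^1},\dots,\Upsilon_{s^n}\}$. The relation between the two matrix sets is symmetric, since $\Upsilon_{s^i} = \Theta_{s^i} + (-c)\mathbf{1}_p\mathbf{1}_q^T$, so rerunning the argument with $-c$ yields the converse and establishes the ``vice versa'' direction. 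The only point requiring care is that the equivalence must be argued at the level of the optimizing sets rather than merely the optimal values, so that the conclusion is robust to how ties in the follower's best response are broken.
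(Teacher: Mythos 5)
Your proof is correct and is precisely the ``straightforward'' argument the paper alludes to when it omits the proof: the normalization $\eta^T\mathbf{1}_p = 1$ and $\mathbf{1}_q^T\delta^{s^i} = 1$ makes every bilinear payoff shift by exactly $c$, so all $\argmax$ and $\argmin$ sets in the equilibrium conditions are preserved, and symmetry in $\pm c$ gives the converse. Your added care about arguing at the level of optimizer sets (rather than optimal values) to handle ties in the follower's best response is a worthwhile refinement the paper does not spell out.
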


The proof of the lemma is straightforward and thus omitted here. Through Lemma \ref{theo:equvalence}, we can assume that the reward function is lower bounded by a positive real value.

\begin{assum}\label{assum:possitivity}
The reward function of the game $\mathcal{G}$ is lower bounded by a positive real value, i.e., $\exists \ \underline{r}>0$ such that $\Upsilon_s^{[a^L,a^F]} \geq \underline{r}$ for all $s \in S, a^L \in A^L$, and $a^F \in A^F$. 
\end{assum}

\begin{thm}\label{theo:one_sided_game_linear_programming} 
For any one-stage OTZ-POSG, the policy $\eta$ that achieves the Stackelberg equilibrium defined by Equation (\ref{eq:value_func_leader}) can be solved by linear programming. 
\end{thm}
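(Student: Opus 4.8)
The plan is to eliminate the inner maximization analytically and then recognize the remaining minimization over $\eta$ as a linear program through an epigraph reformulation. First I would observe that, for any fixed leader policy $\eta$ and any state $s^i$, the inner objective $\eta^T \Upsilon_{s^i} \delta^{s^i}$ is linear in the follower's mixed policy $\delta^{s^i}$, which ranges over the probability simplex in $\mathbb{R}^{|A^F| \times 1}$. Since a linear functional over a polytope attains its maximum at a vertex, and the vertices of the simplex are exactly the pure follower actions, the inner maximum equals $\max_{a^F \in A^F} (\eta^T \Upsilon_{s^i})_{a^F}$, the largest entry of the row vector $\eta^T \Upsilon_{s^i}$. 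Substituting this into Equation (\ref{eq:value_func_leader}) turns the nested, bilinear min-max into
\begin{equation*}
    v^L(b) = \min_{\eta} \sum_{s^i} b_{s^i} \max_{a^F \in A^F} (\eta^T \Upsilon_{s^i})_{a^F},
\end{equation*}
a minimization of a convex, piece-wise linear function of $\eta$ over the simplex.

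Next I would linearize each pointwise maximum by introducing one auxiliary scalar $z_{s^i}$ per state, standing for the follower's attained value in state $s^i$. This yields the program
\begin{equation*}
\begin{array}{ll}
    \min_{\eta, z} & \sum_{s^i} b_{s^i} z_{s^i} \\
    \text{s.t.} & z_{s^i} \geq (\eta^T \Upsilon_{s^i})_{a^F}, \quad \forall s^i \in S, \ a^F \in A^F, \\
    & \mathbf{1}_{|A^L|}^T \eta = 1, \quad \eta \geq \mathbf{0}_{|A^L|}.
\end{array}
\end{equation*}
The objective and every constraint are linear in the decision variables $(\eta, z)$, so this is a linear program of modest size ($|A^L| + |S|$ variables, and $|S| \cdot |A^F|$ epigraph constraints together with the simplex constraints), and its minimizer $\hat{\eta}$ is the sought leader policy.

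The remaining work is to verify that this program actually computes the quantity in Equation (\ref{eq:value_func_leader}) and returns a Stackelberg leader policy. For every state with $b_{s^i} > 0$, the positive weight in the objective pushes $z_{s^i}$ down to its lower envelope $\max_{a^F}(\eta^T \Upsilon_{s^i})_{a^F}$ at optimality, so the optimal LP value equals $v^L(b)$; states with $b_{s^i} = 0$ contribute nothing to the objective and impose no effective restriction on $\eta$, so they may be ignored. I would then note that the tight epigraph constraint in each supported state identifies the follower's best-response action there, recovering the follower's best response and hence the value function $v^F(b, s^i)$.

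I expect the main obstacle to be the careful justification of the follower-side reduction: showing that permitting a mixed response $\delta^{s^i}$ does not enlarge the attainable value beyond pure responses (the vertex argument), and that the per-state independence of the follower's maximization is legitimate precisely because the follower observes the true state $s_t$. A secondary point worth making explicit is the role of Assumption \ref{assum:possitivity}: although the epigraph program above needs no sign condition, the strict positivity $\underline{r} > 0$ is exactly what would license the alternative derivation advertised in the introduction, in which the problem is cast as a linear-fractional program and solved via the Charnes--Cooper transformation, since that transformation requires the denominator to remain positive over the feasible region.
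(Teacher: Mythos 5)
Your proposal is correct, but it takes a genuinely different route from the paper's proof. The paper follows the classical normalization device for matrix games: it sets $f^i(\eta)=\max_{\delta^{s^i}}\eta^T\Upsilon_{s^i}\delta^{s^i}$, introduces the rescaled variables $x^i=\eta/f^i(\eta)$ and the ratios $\rho^i=f^i(\eta)/f^1(\eta)$, arrives at a linear-fractional program in $(x^1,\rho^2,\dots,\rho^n)$, and only then obtains a linear program via the Charnes--Cooper transformation; this is precisely where Assumption \ref{assum:possitivity} (and hence Lemma \ref{theo:equvalence}) is needed, both to keep the denominator $d^Tz$ bounded away from zero and to restrict the feasible region to a bounded set. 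You instead eliminate the inner maximization by the simplex-vertex argument and linearize the resulting convex piece-wise linear objective with one epigraph variable $z_{s^i}$ per state, which yields an LP of essentially the same size while dispensing with the positivity assumption and with all of the boundedness bookkeeping. Your formulation also delivers the same structural payoff as the paper's corollary on the value-function representation: your feasible polyhedron is independent of $b$, contains no line (so the minimum is attained at an extreme point), and the objective is $b^Tz$, so $v^L(b)$ is again a pointwise minimum of finitely many linear functions of $b$, and $v^F(b,s^i)$ is read off from the tight epigraph constraints as piece-wise constant. Your handling of the two delicate points --- the legitimacy of pure best responses per state (the follower observes $s$, and in the zero-sum setting tie-breaking among best responses does not change the leader's value) and the vacuity of the constraints for states with $b_{s^i}=0$ --- is sound, and your closing observation correctly locates the only role of Assumption \ref{assum:possitivity} as an artifact of the fractional-programming route rather than an intrinsic requirement of the theorem.
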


\begin{proof} Inspired by the work for normal-form games \cite{bacsar1998dynamic}, we define a function $f^i(\eta)$ for each state $s^i$ as $f^i(\eta) =  \max_{\delta^{s^i}} [\eta^T \Upsilon_{s^i} \delta^{s^i}]$. As the function $f^i(\eta)$ is the maximum value, we have $f^i(\eta) \geq \eta^T \Upsilon_{s^i} \delta^{s^i}$ for all $\delta^{s^i}$. It is equivalent to the inequality $f^i(\eta) \mathbf{1}_{|A^F|} \geq \Upsilon^T_{s^i} \eta$. Let's define a new variable $x^i = {\eta}/{f^i(\eta)}$. It is easy to see that $x^i$ are linearly dependent for all $i$. Define the scale factor between the vector  $x^1$ and $x^i$ to be $\rho^i =  x^1/{x^i}$. The policy $\eta$ and the corresponding value function $v^{L}(b)$ can be solved by the following optimization problem. 
\begin{equation}\label{eq:optimization_mixted_policy}
\begin{aligned}
    \min_{x^1,\rho^2,...,\rho^n} & \frac{b^T \rho}{({x^1})^T \mathbf{1}_{|A^L|}} \\
    \text{s.t.} & \ \Upsilon^T_{s^i} x^1 \leq \rho^i \mathbf{1}_{|A^F|}, \forall i \in \{1,...,n\}, \\
    & \ x^1 \geq \mathbf{0}_{|A^L|}, \ \ \rho^j > 0, \forall j \in \{2,...,n\}, \\
    & \ (x^1)^T \mathbf{1}_{|A^L|} \geq {1}/{\bar{f}}, 
\end{aligned}
\end{equation}
where $\rho = [\rho^1, \rho^2,...,\rho^n]^T \in \mathbb{R}^{n \times 1}$, $\rho^1 = 1$, $n = |S|$ and $\bar{f}$ is an upper bound of $f^i(\eta)$ for all $1 \leq i \leq n$.

The optimization problem given by Equation (\ref{eq:optimization_mixted_policy}) is a linear-fractional program as it is equivalent to 
\begin{equation}\label{eq:linear_fractional_programming}
\begin{aligned}
    \min_{z} & \ \frac{c^T z + \alpha}{{d}^T z} \\
    \text{s.t.} & \ \ \Gamma z \leq \beta, \\
    & \ \ d^T z \geq {1}/{\bar{f}}, \ \ z \geq \mathbf{0}_{|A^L|+n-1}, 
\end{aligned}
\end{equation}
where $\alpha = b_{s^1}$, 
\begin{equation*}
\begin{aligned}
    z = & \left[
    \begin{array}{l}
        x^1     \\
        \rho^2  \\
        \cdots  \\
        \rho^n
    \end{array}
    \right], 
    d = \left[ 
    \begin{array}{l}
        \mathbf{1}_{|A^L|} \\
        0       \\
        \cdots  \\
        0
    \end{array}
    \right], 
    c = \left[
    \begin{array}{l}
        \mathbf{0}_{|A^L|} \\
        b_{s^2} \\
        \cdots  \\
        b_{s^n}
    \end{array}
    \right], \\
    \Gamma = & \left[ 
    \begin{array}{rrrr}
        \Upsilon^T_{s^1},    & \mathbf{0}_{|A^F|},   & ...  \!     & \mathbf{0}_{|A^F|}  \\
        \Upsilon^T_{s^2},    & -\mathbf{1}_{|A^F|},  & ...  \!     & \mathbf{0}_{|A^F|}  \\ 
        \cdots               & \cdots                & ...  \!     & \cdots        \\
        \Upsilon^T_{s^n},    & \mathbf{0}_{|A^F|},   & ...  \!     & -\mathbf{1}_{|A^F|}
    \end{array}
    \right], 
    \beta = \left[ 
    \begin{array}{cc}
        \mathbf{1}_{|A^F|}  \\
        \mathbf{0}_{|A^F|}  \\
        \cdots              \\
        \mathbf{0}_{|A^F|}
    \end{array}
    \right].
\end{aligned}
\end{equation*}
The feasible region of the linear-fractional program is nonempty as the optimal policy of the leader exists and the value $d^T z = 1/f^1({\eta})$ is non-zero for all $\eta$. Hence, the linear-fractional program has a feasible solution. The feasible region of the linear-fractional program is not bounded. But we can restrict variables into a bounded space without changing the optimal solution of the original problem. To show this, we begin with the definition of the function $f^i(\eta)$. With Assumption \ref{assum:possitivity}, we have $f^i(\eta) \geq \underline{r}$. Hence, the variable $x^i \leq (1/\underline{r}) \mathbf{1}_{|A^L|}$ for all $i$. For the variable $\rho^i = f^i(\eta) / f^1(\eta)$, it is also upper bounded because $f^i(\eta)$ is upper bounded and $f^1(\eta)$ is lower bounded by $\underline{r}$. Meanwhile, we have $\rho^i > 0$ for all $i$. Hence, we can restrict the variables into a bounded space 
$\mathcal{U}=\{z | \mathbf{0}_{|A^L|} \leq x^1 \leq ({1}/{\underline{r}}) \mathbf{1}_{|A^L|}, 0 < \rho^i \leq {\bar{f}}/{\underline{r}},\forall i \}$ 
without changing the optimal solution.

Because the denominator ${d}^T z$ is lower bounded by a positive value, we can convert the linear-fractional program to a linear program \cite{charnes1962programming}. 
\begin{equation}\label{eq:optimization_linear_programming}
\begin{aligned}
    \min_{\mu, \lambda} & \ c^T \mu + \alpha \lambda\\
    \text{s.t.} & \ \Gamma \mu \leq \beta \lambda, \\
    & \ d^T \mu = 1,  \ \ \mathbf{0}_{|A^L|+n-1} \leq \mu \leq \bar{\mu}, \\
    & \ 0 < \lambda \leq \bar{f}, 
\end{aligned}
\end{equation}
where $\mu = \frac{z}{d^T z}$, $\lambda = \frac{1}{d^T z}$ and $\bar{\mu} = [\mathbf{1}^T_{|A^L|}, \bar{f}\mathbf{1}^T_{n-1}]^T$.
\end{proof}

\begin{rem}
We tighten the constraint when converting the linear-fractional program to a linear program because the extra constraint $\mathcal{U}$ is too loose. The extra constraint $\mathcal{U}$ is only used to guarantee that the feasible region is bounded. From the original optimization problem given by Equation (\ref{eq:optimization_mixted_policy}), we give a tighter constraint on the variable $\mu$ and $\lambda$ without changing solutions. By solving the linear program, the policy is $\eta = [\mu_{1},...,\mu_{|A^L|}]^T$ and the value $v^{L}(b)$ is $c^T \mu + \alpha \lambda$ for any $b$. Because the coefficient $c$ and $\alpha$ is uniquely determined by the belief $b$, the value function $v^{L}(b)$ is piece-wise linear for belief $b$.
\end{rem}

\begin{defn}
Let $\mathcal{D}$ represent the convex polyhedral set defined by the linear constraint of the linear program in Equation (\ref{eq:optimization_linear_programming}). A point $[\mu^T, \lambda]^T \in \mathbb{R}^{(|A^L| + n) \times 1}$ of the polyhedron $\mathcal{D}$ is called an extreme point if there exists a coefficient $[c^T, \alpha]^T \in \mathbb{R}^{(|A^L| + n) \times 1}$ such that $c^T \mu + \alpha \lambda < c^T \mu' + \alpha \lambda'$ for all $[(\mu')^T, \lambda']^T \neq  [\mu^T, \lambda]^T \in \mathcal{D}$. 
\end{defn}

For a linear objective function $c^T \mu + \alpha \lambda$ defined over a polyhedral convex set $\mathcal{D}$, the minimum value is taken only at extreme points of $\mathcal{D}$. Hence, a direct result from Theorem \ref{theo:one_sided_game_linear_programming} is the value function representation.

\begin{cor}
Let $\mathcal{V} = \{[(\mu^i)^T, \lambda^i]^T\}$ denote the set of all extreme points of the linear constraint in Equation (\ref{eq:optimization_linear_programming}). The value function of the leader can be represented concisely as ${v}^{L}({b}) = \min_i [b^T \theta(\mu^i, \lambda^i)]$ where $\theta(\mu, \lambda) = [\lambda, \mu_{|A^L|+1},...,\mu_{|A^L|+n}]^T$ is a vector extracting elements from $\mu$ and $\lambda$ corresponding to the nonzero entries of the coefficient $[c^T,\alpha]^T$. 
\end{cor}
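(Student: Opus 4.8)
The plan is to reduce the claim to the fundamental theorem of linear programming, namely that a linear objective over a nonempty bounded polyhedron attains its minimum at an extreme point, a fact already recorded in the sentence preceding the corollary. The first observation I would make is structural: the linear constraints defining $\mathcal{D}$ in Equation (\ref{eq:optimization_linear_programming}) --- $\Gamma \mu \leq \beta \lambda$, $d^T \mu = 1$, $\mathbf{0}_{|A^L|+n-1} \leq \mu \leq \bar{\mu}$, and $0 < \lambda \leq \bar{f}$ --- contain no dependence on the belief $b$. Only the objective coefficients $c$ and $\alpha$ vary with $b$. Hence $\mathcal{D}$, and with it the finite set $\mathcal{V}$ of its extreme points, is fixed once the reward matrices are given; this is precisely what allows $\mathcal{V}$ to be precomputed once and reused for every $b$.

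Next I would rewrite the objective in belief coordinates. Recalling from the proof of Theorem \ref{theo:one_sided_game_linear_programming} that $\alpha = b_{s^1}$ and $c = [\mathbf{0}_{|A^L|}^T, b_{s^2}, \ldots, b_{s^n}]^T$, the nonzero entries of the coefficient $[c^T, \alpha]^T$ are exactly the belief components $b_{s^1}, \ldots, b_{s^n}$. Selecting the matching coordinates of $(\mu, \lambda)$ --- which is by definition what the map $\theta$ does --- yields the identity $c^T \mu + \alpha \lambda = b^T \theta(\mu, \lambda)$ for every $(\mu, \lambda)$. Therefore $v^L(b) = \min_{(\mu,\lambda)\in\mathcal{D}} b^T\theta(\mu,\lambda)$, that is, the optimal value of the linear program viewed as a single linear functional over the fixed feasible set.

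I would then close the argument in two directions. Since each point of $\mathcal{V}$ is feasible, $b^T\theta(\mu^i,\lambda^i) \geq v^L(b)$ for every $i$, so $\min_i b^T\theta(\mu^i,\lambda^i) \geq v^L(b)$. Conversely, because Theorem \ref{theo:one_sided_game_linear_programming} shows the feasible region is nonempty and bounded, the fundamental theorem of linear programming guarantees the minimum of $b^T\theta$ over $\mathcal{D}$ is attained at some extreme point, giving $\min_i b^T\theta(\mu^i,\lambda^i) \leq v^L(b)$. The two inequalities combine to $v^L(b) = \min_i [b^T\theta(\mu^i,\lambda^i)]$.

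The one step that demands care is justifying the appeal to the fundamental theorem despite the strict constraint $0 < \lambda$, which leaves $\mathcal{D}$ formally non-closed. I would dispose of this using the substitution $\lambda = 1/(d^T z)$ with $d^T z \geq 1/\bar{f}$ and $z$ confined to the bounded set $\mathcal{U}$ established in Theorem \ref{theo:one_sided_game_linear_programming}: this forces $\lambda$ to stay bounded away from $0$ by a positive constant, so the effective feasible region is compact and the theorem applies verbatim. A minor secondary point is that for some beliefs the functional $b^T\theta$ may be minimized along an entire face rather than at a single vertex; the representation is unaffected, since at least one vertex lies on that optimal face and every vertex belongs to $\mathcal{V}$.
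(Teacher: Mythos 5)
Your proposal is correct and follows exactly the route the paper intends: the paper treats the corollary as an immediate consequence of the fact that a linear objective over the fixed polyhedron $\mathcal{D}$ attains its minimum at an extreme point, together with the identity $c^T\mu + \alpha\lambda = b^T\theta(\mu,\lambda)$. Your extra care about the open constraint $0<\lambda$ (showing $\lambda$ is in fact bounded away from zero so the feasible set is effectively compact) is a detail the paper glosses over, but it does not change the argument.
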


\begin{rem}
All extreme points of the polyhedral convex set $\mathcal{D}$ can be solved using the algorithm proposed in \cite{balinski1961algorithm}. The value function of the leader is piece-wise linear and convex for the belief $b$. Correspondingly, the value function ${v}^{F}({b},s^i)$ is piece-wise constant for belief $b$. 
\end{rem}

\begin{exmp}\label{exam:one_horizon_game}
We consider a one-stage OTZ-POSG where the state space is $S=\{s^1, s^2\}$, the action set of the leader is $A^L = \{a^L_1, a^L_2\}$, the action set of the follower is $A^F = \{a^F_1, a^F_2\}$, and the reward matrix is $\Upsilon_{s^1} = [4, 2; 2, 7]$ for state $s^1$ and $\Upsilon_{s^2} = [8, 6; 3, 4]$ for state $s^2$. Extreme points and the corresponding policies $\eta$ derived from this game are listed in the following table. 
\begin{equation*}
\begin{array}{|c|cc|cc||c|cc|cc|c|cc|c|}
    \text{} & \hat{\eta}_1 & \hat{\eta}_2 & \lambda & \mu_3 \\
    \hline
    1 & 0.000  & 1.000  & 7.000  & 4.000 \\
    2 & 0.333  & 0.667  & 5.333  & 4.667 \\
    3 & 0.714  & 0.286 &  3.429  & 6.571 
\end{array}
\end{equation*}

The piece-wise linear value function ${v}^{L}({b})$ and piece-wise constant value function $v^F(b,s^i)$ are shown in Fig.~\ref{fig:policy_player2}. Three extreme points are found and vectors $\theta(\mu^i, \lambda^i)$ are shown by dotted lines. For different belief $b$, an extreme point is chosen by the $\min$ operator. 

\begin{figure}
\centering
\includegraphics[width=0.95\linewidth]{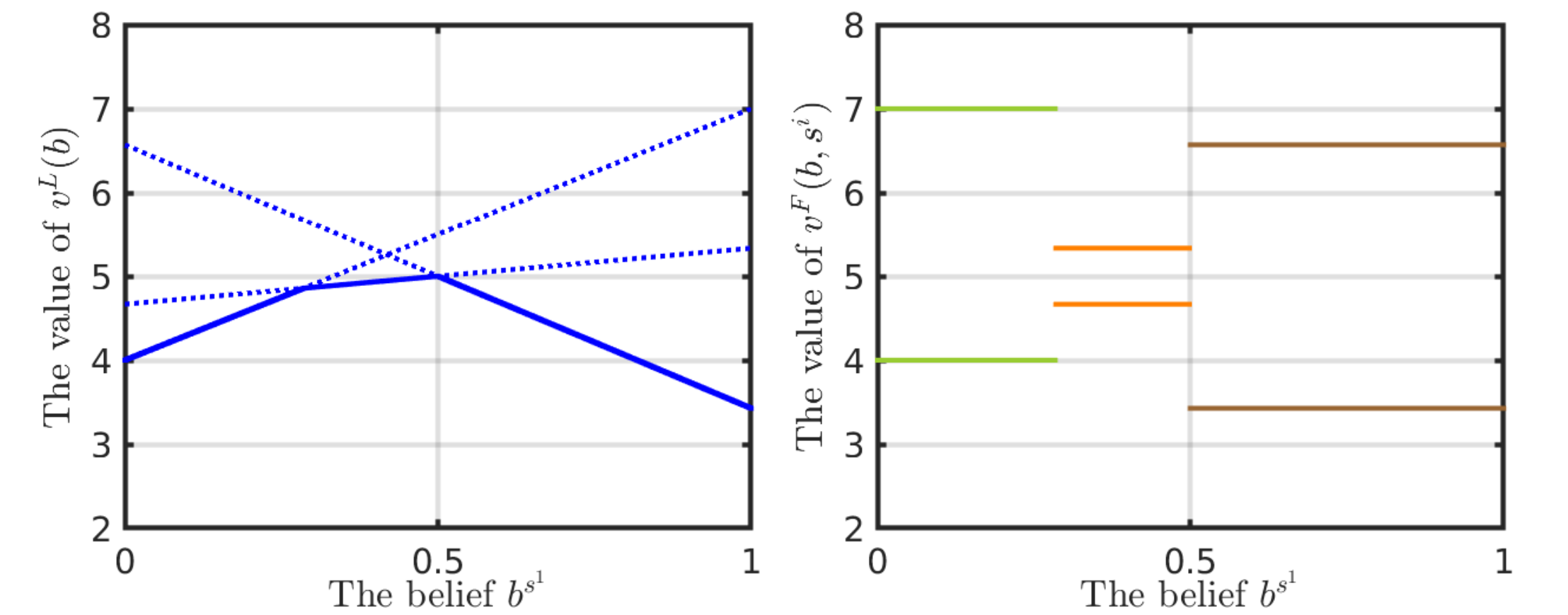}
\caption{The piece-wise linear value function $v^L(b)$ and the piece-wise constant value function $v^F(b,s^i)$.}
\label{fig:policy_player2}
\end{figure}
\end{exmp}

%%%%%%%%%%%%%%%%%%%%%%%%%%%%%%%%%%%%%%%%%%%%%%%%%%%%%%%%%%%%%%%%%%%%%%%%%%%%%%%%%%%
%%%%%%%%%%%%%%%%%%%%%%%%%%%%%%%%%%%%%%%%%%%%%%%%%%%%%%%%%%%%%%%%%%%%%%%%%%%%%%%%%%%
\section{Multi-stage OTZ-POSGs}\label{sec:multiple_stage}
For the OTZ-POSG with multiple stages, the equilibrium can be solved by dynamic programming and the total rewards can be solved through value iteration.  Technically, at stage $t$, the value function of the leader achieving the Stackelberg equilibrium is
\begin{equation}\label{eq:mutl_vf_leader}
    {v}^{L}({b}_{t}) = \min_{\eta_t} \Big[ \sum_{s_t} b_{t}^{s_t} \big( \max_{\delta^{s_t}_t} \big[ \eta_t^T (\Upsilon_{s_t} + \Phi_{s_t}) \delta^{s_t}_t \big] \big) \Big], 
\end{equation}
where $\eta_t \in \mathbb{R}^{|A^L| \times 1}$ and $\delta^{s_t}_t \in \mathbb{R}^{|A^F| \times 1}$ are policies. The matrix $\Phi_{s_t} \in \mathbb{R}^{|A^L| \times |A^F|}$ represents the future reward for each joint action $a_t = [a_t^L, a_t^F]$. The element at row $a_t^L$ and column $a_t^F$ is $\Phi_{s_t}^{a_t} = \sum_{s_{t+1}} T^{s_{t+1}}_{s_t,a_t} \sum_{o_{t+1}} \Xi^{o_{t+1}}_{s_{t+1}} {v}^{F}(b_{t+1},s_{t+1})$. For any $s_t \in S$, the value function of the follower is ${v}^{F}(b_{t},s_t) = \hat{\eta}_t^T (\Upsilon_{s_t} + \Phi_{s_t}) \hat{\delta}^{s_t}_t$ 
where $\hat{\eta}_t$ and $\hat{\delta}^{s_t}_t$ are solutions of Equation (\ref{eq:mutl_vf_leader}).

The main challenge is the information leakage issue which also appeared in the repeated games with incomplete information \cite{aumann1995repeated}. As Fig.~\ref{fig:information_flow} shows, the action taken by the follower reveals the state information because the policy of the follower is state-dependent. If the follower takes full advantage of the private information, the follower reveals the state information to the leader. Then, the leader can infer more state information and achieves more rewards in the future.

To solve this issue, we consider the $\epsilon$-Stackelberg equilibrium. The basic idea is to sacrifice certain rewards in the current stage for more future rewards (from the follower's perspective). First, we define a sub-optimal policy for the follower.

\begin{defn} A policy $\tilde{\delta}_t^{s_t} \in \mathbb{R}^{|A^F| \times 1}$ is said to be a $\epsilon$-sacrifice policy of $\hat{\delta}_t^{s_t}$ if $\hat{\eta}_t^T (\Upsilon_{s_t}  + \Phi_{s_t})(\hat{\delta}^{s_t}_t - \tilde{\delta}^{s_t}_t) \leq \epsilon$ for all state $s_t \in S$. 
\end{defn}

By adopting the $\epsilon$-sacrifice policy, the follower guarantees that the sacrificed reward is bounded by $\epsilon$. Meanwhile, by keeping the $\epsilon$-sacrifice policy private, the follower prevents the leader from inferring the state information for future stages. Inspired by the work \cite{bacsar1998dynamic}, we fit the concept of $\epsilon$-Stackelberg equilibrium to our problem as follows.

\begin{defn} Given the total rewards for both players, a pair of policies $[\hat{\pi}^{L}, \hat{\pi}^{F}]$ forms a Stackelberg equilibrium if it satisfies the following conditions, 
\begin{equation}
\begin{array}{lll}
    v^L_{\hat{\pi}^{L}, \sigma(\hat{\pi}^{L})}(b_0)  & \leq v^L_{\pi^{L}, \sigma({\pi}^{L})}(b_0) + \epsilon, &\forall \pi^{L}, \\
    v^F_{\hat{\pi}^{L}, \hat{\pi}^{F}}(b_0,s_0) & \geq v^F_{\hat{\pi}^{L}, \pi^{F}}(b_0, s_0) - \epsilon, &\forall \pi^{F}, 
\end{array}
\end{equation}
where $\sigma({\pi}^{L})$ is a reaction function of the follower defined by $\sigma({\pi}^{L}) = \argmax_{\pi^F} v^F_{{\pi}^{L}, {\pi}^{F}}(b_0,s_0)$.
\end{defn}

To show the $\epsilon$-Stackelberg equilibrium, we first calculate the matrix $\Phi_{s_t}$ from the value function ${v}^{F}(b_{t+1},s_{t+1})$. Although the value function is piece-wise constant, its value relies on the belief $b_{t+1}$. We have to represent it as a function of belief $b_t$. To solve this issue, we propose a belief space partition approach.

\begin{defn}
Given a belief space $\Delta$, a partition of the belief space $\Delta$ is defined as $\Lambda = \{\Delta_1,...,\Delta_m\}$ where $\Delta = \cup_i \Delta_i$ and $\Delta_i \cap \Delta_j = \emptyset, \forall i \ne j$.
\end{defn}

In our proposed belief space partition approach, each partition $\Delta_i$ is represented as $\Delta_i =\{b \in \Delta|\Pi^i b \leq \mathbf{0}_{l_i}\}$ where $\Pi^i \in \mathbb{R}^{|l_i| \times |S|}$ can be constructed iteratively. To illustrate the space partition approach, we begin with the stage $t+1$ and assume that the belief space partition $\Lambda^{t+1} = \{\Delta^{t+1}_1,..., \Delta^{t+1}_{m_{t+1}}\}$ is given. A belief $b_{t+1}$ belongs to the set $\Delta^{t+1}_i$ if $\Pi^i_{t+1} b_{t+1} \leq \mathbf{0}_{l_i^{t+1}}$. For each joint action $a_t$ and observation $o_{t+1}$, plugging in the belief $b_{t+1}$ from Equation (\ref{eq:belief_updating}), we can convert the linear constraint $\Pi^i_{t+1} b_{t+1} \leq \mathbf{0}_{l_i^{t+1}}$ into $\bar{\Pi}^i_{t} b_{t} \leq \mathbf{0}_{l_i^{t+1}}$. The partition of the belief $b_t$ is $\bar{\Lambda}^{t}_{a_t, o_{t+1}} = \{\bar{\Delta}^{t}_i, ..., \bar{\Delta}^{t}_{m_{t+1}}\}$ with $\bar{\Delta}^{t}_i$ defined as $\bar{\Delta}^t_i = \{b_t|\bar{\Pi}^i_{t} b_{t} \leq \mathbf{0}_{l_i^{t+1}}\}$. Combing partitions of all joint actions $a_t$ and observations $o_{t+1}$, we have a finer partition of the belief space $\bar{\Lambda}^t = \{\cap_{a_t, o_{t+1}} \bar{\Delta}^t_{a_t,o_{t+1}} | \bar{\Delta}^t_{a_t,o_{t+1}} \in \bar{\Lambda}^{t}_{a_t, o_{t+1}} \}$. The constraint matrix for the intersection can be achieved by concatenating the constraint matrix $\bar{\Pi}^i_{t}$ together. In each region ${\Delta}^t_i$ of set $\bar{\Lambda}^t$, the value ${v}^{F}(b_{t+1},s_{t+1})$ is a constant value. Hence, matrices $\{\Phi_{s_t}, s_t \in S\}$ are constant and can be calculated. Based on the matrices $\{\Phi_{s_t}, s_t \in S\}$, a new linear program can be formulated and all extreme points can be founded. The region ${\Delta}^t_i$ can be further partitioned using these extreme points. In each partition, the value function ${v}^{L}(b_{t})$ is linear and the value function ${v}^{F}(b_{t},s_{t})$ is constant.

Through partitioning the belief space, we can calculate the value function iteratively. When the follower adopts a $\epsilon$-sacrifice policy, the $\epsilon$-Stackelberg equilibrium is achieved.

\begin{assum}\label{assum:sacrifice_policy}
The fact that the follower adopts a $\epsilon$-sacrifice policy is common knowledge for both players, but the value $\epsilon$ and the $\epsilon$-sacrifice policy are private. 
\end{assum}

\begin{rem}
Through Assumption \ref{assum:sacrifice_policy}, we assert that the leader only updates the belief state using observations. It is because the value $\epsilon$ and the $\epsilon$-sacrifice policies are private information. It is nontrivial for the leader to infer the $\epsilon$-sacrifice policy in finite stages.
\end{rem}

\begin{thm}\label{thm:epsilon}
Given a finite horizon $h$ and a positive real value $\epsilon$, the OTZ-POSG achieves an $\epsilon$-Stackelberg equilibrium if the follower adopts a $\frac{\epsilon}{h+1}$-sacrifice policy. 
\end{thm}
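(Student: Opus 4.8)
The plan is to verify the two defining inequalities of the $\epsilon$-Stackelberg equilibrium separately, and in both cases to control the total deviation by telescoping the per-stage $\frac{\epsilon}{h+1}$-sacrifice over the $h$ reward-generating stages. Throughout I would work with the value recursion in Equation (\ref{eq:mutl_vf_leader}), in which the future-reward matrix $\Phi_{s_t}$ is assembled from the follower values $v^{F}(b_{t+1},s_{t+1})$. The key structural fact I would invoke is Assumption \ref{assum:sacrifice_policy}: because the value $\epsilon$ and the $\epsilon$-sacrifice policy are private, the leader updates its belief through Equation (\ref{eq:belief_updating}) using only the public observation together with the public joint action for the dynamics, never by Bayesian inference on the follower's state-dependent choice. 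This decouples the information-leakage channel, so that $b_{t+1}$, and hence $\Phi_{s_t}$ and the whole recursion, depend on the follower's play only through the realized joint action and not through any revealed private information. This is exactly what keeps both value functions well-defined along the actual sacrificing trajectory.

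For the follower's inequality I would argue by backward induction on the stage index. Let $V^{F}_t(b,s)$ be the follower's value-to-go under the stage-optimal response $\hat{\delta}$, and let $\tilde{V}^{F}_t(b,s)$ be the value-to-go when the follower instead plays the $\frac{\epsilon}{h+1}$-sacrifice policy $\tilde{\delta}$ from stage $t$ onward, in both cases against the committed leader policy $\hat{\eta}$. The induction hypothesis is $V^{F}_t(b,s)-\tilde{V}^{F}_t(b,s)\le (h-t+1)\frac{\epsilon}{h+1}$ for every $b$ and $s$. The base case is the last stage, where the gap is precisely the one-stage sacrifice bound $\hat{\eta}_t^T(\Upsilon_{s_t}+\Phi_{s_t})(\hat{\delta}^{s_t}_t-\tilde{\delta}^{s_t}_t)\le \frac{\epsilon}{h+1}$. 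For the inductive step I would split $V^{F}_t-\tilde{V}^{F}_t$ into the immediate single-stage sacrifice, bounded by $\frac{\epsilon}{h+1}$ through the definition of the $\frac{\epsilon}{h+1}$-sacrifice policy with the recursion's $\Phi_{s_t}$, plus the propagated continuation gap $\hat{\eta}_t^T(\Phi_{s_t}-\tilde{\Phi}_{s_t})\tilde{\delta}^{s_t}_t$, where $\Phi_{s_t}$ uses $V^{F}_{t+1}$ and $\tilde{\Phi}_{s_t}$ uses $\tilde{V}^{F}_{t+1}$. Since each entry of $\Phi-\tilde{\Phi}$ is a convex combination of next-stage value gaps through the stochastic $T$ and $\Xi$, and since $\hat{\eta}_t$ and $\tilde{\delta}^{s_t}_t$ are probability vectors, this second term is bounded by $(h-t)\frac{\epsilon}{h+1}$, yielding $(h-t+1)\frac{\epsilon}{h+1}$. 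Evaluating at the first stage gives $V^{F}_1(b_0,s_0)-\tilde{V}^{F}_1(b_0,s_0)\le \frac{h}{h+1}\epsilon\le \epsilon$, which is exactly the second inequality once $\sigma(\hat{\pi}^{L})$ is identified with the stage-optimal response.

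For the leader's inequality I would use the zero-sum relation $v^{L}_{\pi^{L},\pi^{F}}(b_0)=\sum_{s_0} b_0^{s_0}\, v^{F}_{\pi^{L},\pi^{F}}(b_0,s_0)$ together with the optimality of the committed minimizer. By construction $\hat{\eta}_t$ attains the outer $\min_{\eta}$ in Equation (\ref{eq:mutl_vf_leader}) at every reachable belief, so the recursion value is the smallest best-response value to which the leader can be held. Under Assumption \ref{assum:sacrifice_policy} the follower cannot improve its continuation by revealing information, so the reaction $\sigma(\hat{\pi}^{L})$ coincides with the stage-optimal response already built into the recursion, and hence $v^{L}_{\hat{\pi}^{L},\sigma(\hat{\pi}^{L})}(b_0)$ equals the recursion value $\min_{\pi^{L}} v^{L}_{\pi^{L},\sigma(\pi^{L})}(b_0)$. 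The first inequality then holds with the $\epsilon$ on the right-hand side available as slack, while any perturbation from evaluating the committed policy along the sacrificing play is again a belief-weighted average of per-stage sacrifices and is bounded by $\frac{h}{h+1}\epsilon\le \epsilon$.

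The main obstacle I anticipate is making the information-leakage decoupling rigorous: one must show that, under Assumption \ref{assum:sacrifice_policy}, the matrices $\Phi_{s_t}$ computed along the actual sacrificing trajectory agree, up to the controlled sacrifice, with those in the Stackelberg recursion, so that the per-stage bound can legitimately be telescoped rather than compounded. A secondary difficulty is the bookkeeping of the induction constant: keeping the accumulated sacrifice at or below $\epsilon$ is what forces the $\frac{\epsilon}{h+1}$ allowance, with the extra unit in the denominator absorbing the boundary contribution from the initial belief-state and leaving simultaneous slack for both the leader's and the follower's inequalities.
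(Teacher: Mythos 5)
Your proposal follows essentially the same route as the paper's proof: a backward induction with hypothesis $(h-t+1)\frac{\epsilon}{h+1}$, the per-stage decomposition into the immediate one-stage sacrifice plus the propagated continuation gap through $\Phi_{s_t}-\tilde{\Phi}_{s_t}$ (bounded because the entries are convex combinations of next-stage gaps under $T$ and $\Xi$), and finally the transfer to the leader via $v^{L}(b_0)=\sum_{s_0} b_0^{s_0} v^{F}(b_0,s_0)$. The argument is correct and matches the paper's; the only cosmetic differences are your one-sided bound versus the paper's absolute-value bound and a stage-indexing offset, neither of which changes the substance.
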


\begin{proof} 
We prove this theorem by induction. Let $\tilde{v}^{L}(b_{t})$ and $\tilde{v}^{F}(b_{t},s_t)$ denote value functions when the follower adopts a ${\epsilon}/{(h+1)}$-sacrifice policy. At stage $h$, it is straightforward to verify the inequality $|{v}^{F}({b}_{h},s_h)-\tilde{v}^{F}({b}_{h},s_h)|\leq {\epsilon}/{(h+1)}$ for any state $s_h$. At stage $t+1$, we assume that $|{v}^{F}(b_{t+1},s_{t+1}) - \tilde{v}^{F}(b_{t+1},s_{t+1})|\leq (h-t){\epsilon}/{(h+1)}$ for any state $s_{t+1}$. Then, at stage $t$, we have $|\Phi_{s_t}^{a_t} - \bar{\Phi}_{s_t}^{a_t}|\leq {(h-t)\epsilon}/{(h+1)}$ where $\bar{\Phi}_{s_t}$ is the matrix calculated from value function $\tilde{v}^{F}(b_{t+1},s_{t+1})$. It is easy to check that $|{v}^{F}(b_{t},s_{t}) - \bar{v}^{F}(b_{t},s_{t})|\leq {(h-t)\epsilon}/{(h+1)}$ where $\bar{v}^{F}(b_{t},s_{t})$ is the value function derived with matrices $\{\bar{\Phi}_{s_t}, s_t \in S\}$. After adopting a $ {\epsilon}/({h+1})$-sacrifice policy at stage $t$, the value sacrificed by the follower is bounded, i.e., $|\bar{v}^{F}(b_{t},s_{t}) - \tilde{v}^{F}(b_{t},s_{t})| \leq {\epsilon}/({h+1})$. Hence, the total distance $|{v}^{F}(b_{t},s_{t}) - \tilde{v}^{F}(b_{t},s_{t})|$ is bounded by $|{v}^{F}(b_{t},s_{t}) - \bar{v}^{F}(b_{t},s_{t})| + |\bar{v}^{F}(b_{t},s_{t}) - \tilde{v}^{F}(b_{t},s_{t})| = (h-t+1)\epsilon/({h+1})$ for any state $s_t \in S$. By induction, we have $|{v}^{F}({b}_{0},s_{0})-\tilde{v}^{F}({b}_{0},s_{0})|\leq \epsilon$. Because ${v}^{L}({b}_{0}) = \sum_{s_0} b_0^{s_0} {v}^{F}({b}_{0},s_0)$ and $\tilde{v}^{L}({b}_{0}) = \sum_{s_0} b_0^{s_0} \tilde{v}^{F}({b}_{0},s_0)$, we have $|{v}^{L}({b}_{0})-\tilde{v}^{L}({b}_{0})|\leq \epsilon$. As a consequence, the $\epsilon$-Stackelberg equilibrium is achieved because we have $\tilde{v}^{F}({b}_{0},s_{0}) \geq {v}^{F}({b}_{0},s_{0})-\epsilon$ and $\tilde{v}^{L}({b}_{0}) \leq {v}^{L}({b}_{0}) \leq {v}^{L}({b}_{0}) + \epsilon$. 
\end{proof}

\begin{thm}\label{theo:piece_wise_lienar_constant}
The value function of the leader is piece-wise linear and the value function of the follower is piece-wise constant for any stage $t$. 
\end{thm}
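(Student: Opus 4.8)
The plan is to prove both claims simultaneously by backward induction on the stage index $t$, running from the horizon $h$ down to the initial stage, since the two value functions are coupled through Equation (\ref{eq:mutl_vf_leader}). The base case is stage $h$: with no future stages the future-reward matrix $\Phi_{s_h}$ vanishes, so Equation (\ref{eq:mutl_vf_leader}) collapses exactly to the one-stage value function of Equation (\ref{eq:value_func_leader}). Theorem \ref{theo:one_sided_game_linear_programming} and its corollary then give directly that $v^{L}(b_h)$ is piece-wise linear and $v^{F}(b_h,s_h)$ is piece-wise constant, furnishing a finite polyhedral partition $\Lambda^h$ on which each leader piece is linear and each follower value is constant.

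For the inductive step I would assume that at stage $t+1$ there is a finite polyhedral partition $\Lambda^{t+1}$ on which $v^{F}(b_{t+1},s_{t+1})$ is constant for every state, and then pull this partition back to stage $t$ through the belief-update map (Equation (\ref{eq:belief_updating})). The crucial observation is that although the update is linear-fractional in $b_t$, its denominator is a strictly positive scalar, so multiplying a half-space constraint $\Pi^i_{t+1} b_{t+1} \leq \mathbf{0}$ by that denominator preserves the inequality direction and yields a genuinely linear constraint $\bar{\Pi}^i_{t} b_{t} \leq \mathbf{0}$ in $b_t$. Performing this for every joint action $a_t$ and observation $o_{t+1}$ and intersecting the resulting cells produces the finer finite partition $\bar{\Lambda}^t$ described just before the theorem. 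On each cell of $\bar{\Lambda}^t$ every $v^{F}(b_{t+1},s_{t+1})$ entering $\Phi_{s_t}^{a_t} = \sum_{s_{t+1}} T^{s_{t+1}}_{s_t,a_t}\sum_{o_{t+1}} \Xi^{o_{t+1}}_{s_{t+1}} v^{F}(b_{t+1},s_{t+1})$ is constant, so the whole matrix $\Phi_{s_t}$ is a fixed constant matrix there.

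With $\Phi_{s_t}$ fixed on a cell, Equation (\ref{eq:mutl_vf_leader}) has exactly the algebraic form of the one-stage value function with the reward matrices $\Upsilon_{s_t}$ replaced by $\Upsilon_{s_t}+\Phi_{s_t}$. I would then invoke Lemma \ref{theo:equvalence} to shift these matrices by a large enough constant so that Assumption \ref{assum:possitivity} holds, apply the linear-programming construction of Theorem \ref{theo:one_sided_game_linear_programming} inside the cell, and enumerate the extreme points as in its corollary. This exhibits the leader value as the minimum of finitely many linear functions of $b_t$, hence piece-wise linear, and the follower value as piece-wise constant on that cell. Subdividing each cell of $\bar{\Lambda}^t$ by the half-spaces induced by these extreme points yields the stage-$t$ partition $\Lambda^t$, which closes the induction.

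The hard part will be the pull-back step together with the bookkeeping that keeps the partition finite. One must check that the positive-denominator argument really turns every face of $\Lambda^{t+1}$ into a linear face in $b_t$ (degenerate beliefs where the normalizer could vanish must be excluded, which the reachability of $b_{t+1}$ together with Assumption \ref{assum:possitivity} handles), and that intersecting across all $a_t,o_{t+1}$ and then subdividing by extreme points only ever produces finitely many convex polyhedral cells at each stage. Granting these, the linearity and constancy on each final cell are inherited verbatim from the one-stage analysis.
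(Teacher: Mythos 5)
Your proposal is correct and follows essentially the same route as the paper, which omits a detailed proof and simply points to the belief space partition approach described before the theorem: backward induction, pulling the stage-$(t{+}1)$ partition back through the belief update (where the positive normalizer turns each half-space into a linear constraint in $b_t$), intersecting over all $a_t, o_{t+1}$ so that $\Phi_{s_t}$ is constant on each cell, and then reapplying the one-stage linear-programming analysis of Theorem~\ref{theo:one_sided_game_linear_programming} and its corollary on each cell. Your write-up is in fact more careful than the paper's, explicitly handling the base case, the degenerate-normalizer beliefs, and the finiteness of the refined partition.
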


The theorem is a direct result of the belief space partition approach and the proof is omitted here.

\begin{rem}
By Theorem \ref{thm:epsilon} and through the belief space partition approach, the $\epsilon$-Stackelberg equilibrium is achieved and policies at the equilibrium are solved iteratively. Hence, Problem \ref{problem:main_problem} is solved.
\end{rem}

\begin{rem} 
From Theorem \ref{theo:piece_wise_lienar_constant}, the belief space is partitioned finer and finer. In the worst case, there are 
$$\varkappa =\frac{(|A^L|+|S|+|A^F||S|)!}{(|A^L|+|S|-1)!(|A^F||S|+1)!}$$
extreme points for each linear program. The total number of partitions grows double exponentially with respect to the planning horizon $h$, i.e. $\mathcal{O}(\varkappa^{(|A^L||A^F||O|)^h} )$, which is a potential bottleneck of the value function calculation. However, an approximation algorithm with performance guarantees is nontrivial to develop because the value function of the follower is piece-wise constant. To approximate this value function, evaluating the boundary is inevitable. 
\end{rem}

\begin{exmp}\label{exam:example3}
Consider a OTZ-POSG model where the state space, the action space and the reward function are defined in Example \ref{exam:one_horizon_game}. The observation set is $O = \{o^1,o^2\}$ and the transition function is  
\begin{equation*}
\begin{array}{||c|cc||c|cc||c|cc||c|cc|}
    a^{1,1} & s_1 & s_2  & a^{1,2} & s_1 & s_2\\
    \hline
    s_1     & 0.3 & 0.7  & s_1     & 0.0 & 1.0 \\
    s_2     & 0.9 & 0.1  & s_2     & 0.8 & 0.2 \\
    \hline
    a^{2,1} & s_1 & s_2 & a^{2,2}  & s_1 & s_2 \\
    \hline
    s_1     & 0.8 & 0.2 & s_1      & 0.5 & 0.5 \\
    s_2     & 0.1 & 0.9 &  s_2     & 0.0 & 1.0 
\end{array}, 
\end{equation*}
where $a^{i,j}$ represents the joint action $[a_i^L,a_j^F]$. The observation probability is $\Xi_{s^1} = [0.6,0.4]$ and $\Xi_{s^2} = [0.1,0.9]$. We assume that the value functions at stage $t+1$ is given by Fig.~\ref{fig:policy_player2}. The belief space is partitioned into three regions represented by ${\Delta}^{t+1}_i = \{b_{t+1}|{\Pi}^i_{t+1} b_{t+1} \leq \mathbf{0}_2\}$ where ${\Pi}^1_{t+1}=[1.67, -0.67;$ $3.57,-2.57]$, ${\Pi}^2_{t+1} = [-1.67, 0.67;1.91,-1.91]$ and ${\Pi}^3_{t+1} = [-3.57, 2.57; -1.91, 1.91]$. For each joint action and observation, we can convert the linear constraint into the form of $\bar{\Pi}^i_{t} b_{t} \leq \mathbf{0}_2$. In this process, the number of partitions may be reduced. For the joint action $a^{1,1}_{t}$ and the observation $o^1_{t+1}$, the constraint matrices $\bar{\Pi}^i_{t}$ are $\bar{\Pi}^1_{t} = [0.25,0.89;0.46,1.90]$, $\bar{\Pi}^2_{t} = [-0.25, -0.89;0.21,1.01]$ and $\bar{\Pi}^3_{t} = [-0.46,-1.90;$ $-0.21,-1.01]$. Among them, only one partition $\bar{\Delta}^t_3 = \{b_t|\bar{\Pi}^3_{t} b_{t} \leq \mathbf{0}_2\}$ is active. For joint action $a^{1,1}_{t}$ and observation $o^2_{t+1}$, the constraint matrices are $\bar{\Pi}^1_{t} = [-0.22,0.54;-1.19,1.05]$, $\bar{\Pi}^2_{t} = [0.22,-0.54; -0.97,0.51]$ and $\bar{\Pi}^3_{t} = [1.19, -1.05;$ $0.97,-0.51]$. All these partitions are active and the corresponding regions are shown in the left figure of Fig.~\ref{fig:beleifspace_partition}. The value $\sum_{s_{t+1}} \Xi^{o_{t+1}}_{s_{t+1}} T^{s_{t+1}}_{s_t,a_t} {v}^{F}(b_{t+1},s_{t+1})$ when $o_{t+1} = o^2_{t+1}$ and $a_t=a_t^{1,1}$ is shown in the right figure of Fig.~\ref{fig:beleifspace_partition}. The red dash-dot lines represent the boundaries of partitions. After combing partitions of all joint actions and observations, we have a finer partition of the belief space. In each partition, we can calculate all candidate $\alpha$-vectors because the matrix $\Phi_{s_t}$ is constant. The value function at stage $t$ is shown in Fig.~\ref{fig:dynamic_partitions}. The dotted lines represent candidate $\alpha$-vectors in each partition and solid lines represent the picked $\alpha$-vectors for the value function.

\begin{figure}
\includegraphics[width=0.95\linewidth]{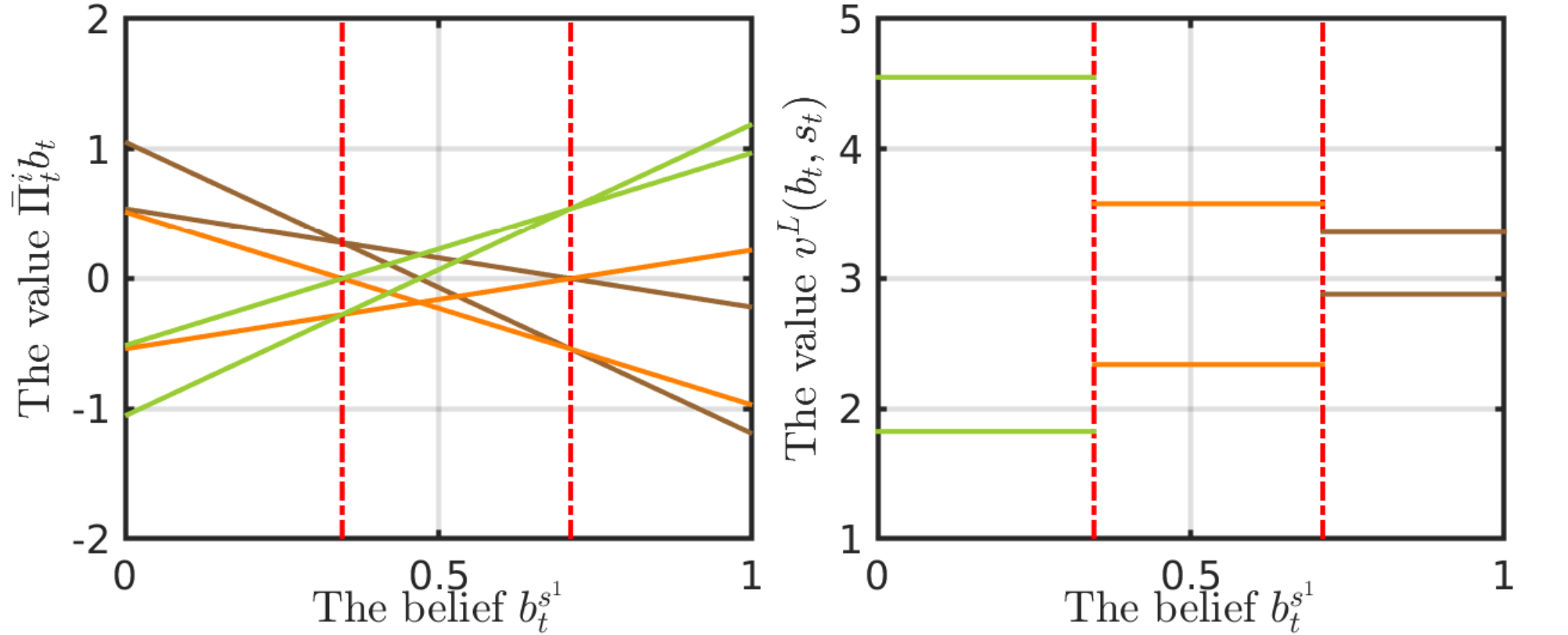}
\caption{The partition of the belief space at stage $t+1$ for the joint action $a_t^{1,1}$ and observation $o^2_{t+1}$.}
\label{fig:beleifspace_partition}
\end{figure}

\begin{figure}
\includegraphics[width=0.95\linewidth]{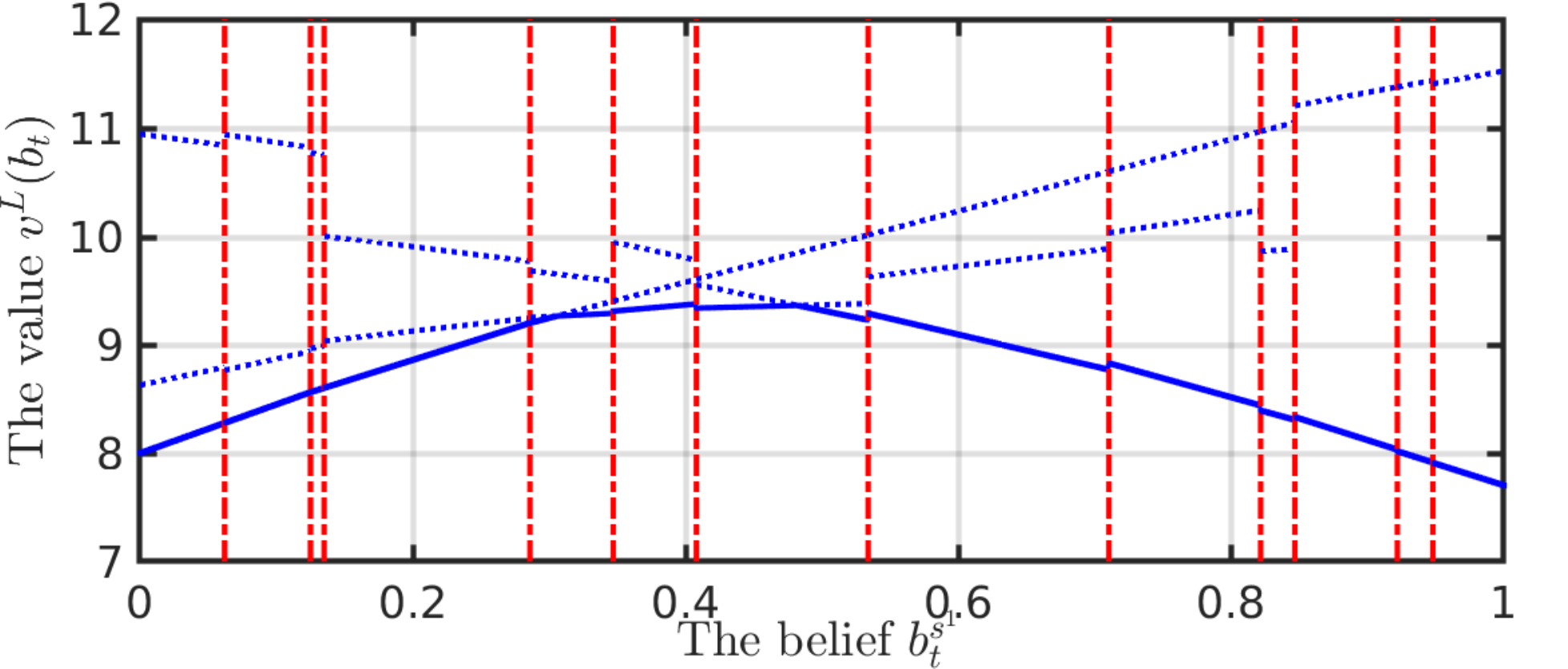}
\caption{The partition of the belief space at stage $t$ and the corresponding value function of the leader.}
\label{fig:dynamic_partitions}
\end{figure}
\end{exmp}

\begin{rem}
From Fig.~\ref{fig:dynamic_partitions}, we see that the value function of the leader is piece-wise linear but not continuous. It is why existing planning algorithms for the POMDP model do not work for the OTZ-POSG. The POMDP can be treated as a special case of the OTZ-POSG. Hence, the proposed space partition algorithm can be applied on the POMDP model. 
\end{rem}

\section{Conclusion}\label{sec:conc}
In this paper, we considered the policy design problem for turn-based OTZ-POSGs with public actions. We proved the existence of the Stackelberg equilibrium for the one-stage OTZ-POSG and shown that, in each stage, the game can be converted into a linear-fractional programming problem, and therefore, solved by linear programming. By enumerating all extreme points of the linear program, we have shown that the value function of the leader is piece-wise linear and the value function of the follower is piece-wise constant. For the finite-horizon POSG, we have proved that the $\epsilon$-Stackelberg equilibrium is achieved. This study will pave the way towards a formal and systematic design theory for problems such as cyber-security defense when actions are public. One bottleneck of the proposed approach is the high computation complexity as the number of partitions grows very fast for the planning horizon. To reduce the complexity will be one of our further work.

\bibliographystyle{plain}
\bibliography{OTZPOSG}
\end{document}